\setlist[itemize]{topsep=\parskip}
\let\emptyset\varnothing
\DeclareMathOperator*{\argmax}{arg\,max}
\DeclareMathOperator*{\argmin}{arg\,min}
\begin{document}
\title{COSTA: Communication-Optimal Shuffle and Transpose Algorithm with Process Relabeling} \titlerunning{COSTA: Comm-Optimal Shuffle and Transpose Alg. with Process Relabeling}


%
%
\author{Marko Kabi\'{c}\inst{1,2} \and
Simon Pintarelli \inst{1,2} \and
Anton Kozhevnikov \inst{1,2} \and
Joost VandeVondele \inst{1,2}}
\authorrunning{M. Kabi\'{c}, S. Pintarelli, A. Kozhevnikov, J. VandeVondele}
%
%
\institute{ETH Zürich, Switzerland \and
Swiss National Supercomputing Centre (CSCS), Switzerland}
\maketitle              
\begin{abstract}
Communication-avoiding algorithms for Linear Algebra have become increasingly popular, in particular for distributed memory architectures. In practice, these algorithms assume that the data is already distributed in a specific way, thus making data reshuffling a key to use them. For performance reasons, a straightforward all-to-all exchange must be avoided.

Here, we show that process relabeling (i.e. permuting processes in the final layout) can be used to obtain communication optimality for data reshuffling, and that it can be efficiently found by solving a Linear Assignment Problem (Maximum Weight Bipartite Perfect Matching). Based on this, we have developed a Communication-Optimal Shuffle and Transpose Algorithm (COSTA): this highly-optimised algorithm implements $A=\alpha\cdot \operatorname{op}(B) + \beta \cdot A,\  \operatorname{op} \in \{\operatorname{transpose}, \operatorname{conjugate-transpose}, \operatorname{identity}\}$ on distributed systems, where $A, B$ are matrices with potentially different (distributed) layouts and $\alpha, \beta$ are scalars. COSTA can take advantage of the communication-optimal process relabeling  even for heterogeneous network topologies, where latency and bandwidth differ among nodes. Moreover, our algorithm can be easily generalized to even more generic problems, making it suitable for distributed Machine Learning applications. 
The implementation not only outperforms the best available ScaLAPACK redistribute and transpose routines multiple times, but is also able to deal with more general matrix layouts, in particular it is not limited to block-cyclic layouts. Finally, we use COSTA to integrate a communication-optimal matrix multiplication algorithm into the CP2K quantum chemistry simulation package. This way, we show that COSTA can be used to unlock the full potential of recent Linear Algebra algorithms in applications by facilitating interoperability between algorithms with a wide range of data layouts, in addition to bringing significant redistribution speedups.

\keywords{COSTA \and Communication-Optimal \and Redistribution \and Transpose \and Perfect Matching \and Linear Assignment \and Random-Phase Approximation (RPA) \and CP2K \and Linear Algebra}
\end{abstract}
\section{Introduction}

Communication-avoiding algorithms for Linear Algebra have become increasingly popular recently, in particular for distributed memory architectures. In practice, these algorithms usually assume the data is already distributed in a specific way. For example, COSMA \cite{cosma}, a communication-optimal matrix multiplication algorithm, natively uses a specialised blocked data layout which depends on matrix shapes and the available resources. Similarly, CARMA \cite{carma} is a recursive communication-avoiding algorithm which requires a block-recursive data layout also depending on matrix shapes and the available resources. On the other hand, most of the software packages for scientific applications like CP2K \cite{cp2k} use ScaLAPACK API \cite{scalapack} which assumes a block-cyclic matrix layout. Hence, the data redistribution (i.e.\ reshuffling) becomes necessary in order to integrate and use these efficient algorithms within the well-established software packages. Performing an all-to-all communication would violate the communication-optimality of these algorithms and must be avoided for performance reasons.

Another example where the data reshuffling is needed is to achieve the optimal performance of existing ScaLAPACK routines. It is known that the performance of ScaLAPACK highly depends on the block size which determines the matrix layout. The optimal block size depends on the target machine and the computational kernel being used \cite{blocks}. Therefore, the data reshuffling might be needed to achieve the optimal block size on a specific system. ScaLAPACK provides the routine pxgemr2d for data reshuffling \cite{pxgemr2d}, but it is limited to block-cyclic layouts.

Here we present COSTA: an algorithm that resolves all these problems: 1) it minimizes the communication-cost of data reshuffling by process relabeling in the target layout 2) it can handle arbitrary grid-like matrix layouts which are not necessarily block-cyclic. Moreover, both row-major and col-major ordering of blocks is supported; 3) it can also transform the data while reshuffling (e.g. transpose or multiply by a scalar). 4) efficiently utilizes the overlap of communication and computation (i.e. the transformation); 5) provides the batched version, which can transform multiple layouts at once, while significantly reducing the latency. COSTA stands for Communication-Optimal (Re-)Shuffle and Transpose Algorithm and refers to the matrix operation that the algorithm implements: $A\leftarrow \alpha \cdot \operatorname{op}(B) + \beta \cdot A$, where $\operatorname{op} \in \{\operatorname{transpose}, \operatorname{conjugate-transpose}, \operatorname{identity}\}$, $A, B$ are distributed matrices and $\alpha, \beta$ scalars. 

The idea of relabeling processes in order to reduce the communication-cost has already been studied in~\cite{redistribution}. However, their model has the following limitations: 1) it implicitly assumes all data pieces (i.e.\ items) are of the same size; 2) each data piece is assumed to belong to a single process; 3) data transformation (e.g. transpose, or multiplication by a scalar) during reshuffling is not considered; 4) the model does not take into account the data locality, e.g. how the local data is stored in the memory; 5) the latency and the bandwidth of all processes is assumed to be the same.

\begin{sloppypar}
In this paper, our contribution is twofold. First, we develop a generalization of the model presented in \cite{redistribution} for finding the Communication-Optimal Process Relabeling that resolves the above-mentioned limitations and is not limited to Linear-Algebra applications. Then, using this model we develop the COSTA algorithm that resolves the limitations of ScaLAPACK \texttt{pxgemr2d} and \texttt{pxtran} routines and outperforms them even when no process relabeling is used. 
\end{sloppypar}

\section{Preliminaries and Notation}

For an arbitrary set $s=\{b_0, b_1, \dots, b_{n-1}\}$, $\lvert s\rvert=n$ denotes its size. For a Cartesian product between two sets $s_1$ and $s_2$ we write $s_1\times s_2$. We define the range as $[n] :=\{0, 1, \dots, n-1\}$. A square matrix $X$ of dimension $n$ is denoted by $X = [[x_{ij}]]_{i, j \in [n]}$ or equivalently $X = [[x_{ij}]]_{0\leq i, j < n}$. We might treat a matrix as a set of its entries, in particular, we might write $x_{ij} \in X$ for $i,j \in [n]$.

A graph $G=(V, E)$ has vertices $V$ and edges $E\subseteq (V\times V)$. A weighted graph is a graph in which each edge is assigned some weight. A bipartite graph with partitions $U$ and $V$ is a graph $G=(U\cup V, E)$ where $E\subseteq U \times V$. If $V=\{v_0, v_1, \dots v_{n-1}\}$ and $V', V''$ are two identical copies of $V$, e.g. $V'=\{v'_0, v'_1, \dots, v'_{n-1}\}$ and $V''=\{v''_0, v''_1, \dots, v''_{n-1}\}$, then we abuse the notation and write $G=(V, V, E)$ to denote the bipartite graph $G=(V', V'', E)$. Moreover, for an edge in this graph, we write $(v_i, v_j)$ instead of $(v'_i, v''_j)$. In a bipartite graph $G=(U, V, E)$, we define the set of all left neighbors of some vertex $v \in V$ as $\mathcal{N}^L_G(v) := \{u\in U : (u, v) \in E\}$ and similarly the set of all right neighbors of some vertex $u \in U$ as $\mathcal{N}^R_G(u) := \{v\in V: (u, v) \in E\}$. Two edges are \emph{adjacent} if they have a common vertex. A matching $M\subseteq E$ is a subset of non-adjacent edges. The weight of the matching is the sum of weights of its edges. A perfect matching of $G$ is a matching that covers every vertex of $G$.

\textbf{Machine Model.}
We assume a distributed memory setting with multiple processes, each having its own private memory. Our model best corresponds to the MPI parallel computing model, where our term \emph{process} corresponds to an MPI rank. We use the term \emph{process} instead of \emph{processor} to avoid the confusion arising when MPI is run on many-core architectures with different process affinity bindings. \emph{Local data} of some process, is the data residing in its private memory, whereas \emph{global data} is the union of local data across all processes. All data, that is not local to some rank is called \emph{remote}. We say that each process \emph{owns} its local data. 

\textbf{Data Package, Block and Volume.}
Let $p_i$ and $p_j$ be two arbitrary processes and let $s = \left\{b_0, b_1, \dots, b_{\lvert s \rvert}\right\}$ be a set of all data pieces that should be sent from $p_i$ to $p_j$. Each data piece $b$ might contain the information about its memory layout (e.g. if it is stored as a 2D block), data locality (e.g. stride, padding, alignment), memory ordering (e.g. row- or column-major) and similar. Each $b \in s$ is called a \emph{block} and $s$ a \emph{package}. A \emph{block volume} $V(b)$ is the size of block $b$ in bytes and similarly, a \emph{package volume} $V(s)$ is the sum of all block volumes it contains: $V(s) = \sum_{b\in s} V(b)$. 

\section{Communication Cost Function}
\label{sec:comm_cost}
\begin{sloppypar}
The communication cost $w(p_i, p_j, s)$ represents the cost of sending the package $s=\{b_0, b_1,\dots,b_{\lvert s \rvert}\}$ from process $p_i$ to process $p_j$. Formally, if $P=\left\{p_0, p_1, \dots, p_{n-1}\right\}$ is the set of all processes and $S$ the set of all packages that are to be exchanged, then the communication cost function is defined as a function $w: P \times P \times S \mapsto \mathbb{R}$. Specifically, $w(p_i, p_j, \emptyset)=0$ for any $p_i, p_j \in P$.
\end{sloppypar}
For an arbitrary $p_i, p_j \in P$ and $s \in S$, in its simplest form, $w$ can be defined as follows:
\begin{equation}
w(p_i, p_j, s) = \begin{cases}
V(s), &p_i \neq p_j \\
0, &\text{otherwise}
\end{cases}
\label{eq:locally_free_w}
\end{equation}
This cost function considers all local communication free, whereas a remote communication cost is equal to the data volume. We will refer to this communication cost as \emph{locally-free-volume-based} cost.

\noindent Alternatively, the cost function can also include the following factors: 
\begin{itemize}
    \item \textbf{Network Topology:} $w$ can take into account the physical topology e.g. using some of the bandwidth-latency models \cite{latency_bandwidth}. If $L(p_i, p_j)$ is the latency cost between $p_i$ and $p_j$, and $B(p_i, p_j)$ be the bandwidth cost per data unit, $w$ could be defined as $w(p_i, p_j, s) = L(p_i, p_j) + B(p_i, p_j) \cdot V(s).$
    \item \textbf{Transformation cost:} if the data also needs to be transformed on-the-fly, while being sent, e.g. the data should be transposed or multiplied by a scalar, then the cost of this transformation can also be included. For example, $w$ can be defined as $w(p_i, p_j, s) = c \cdot \sum_{b \in s}I_{T}(b) \cdot \lvert b \rvert,$ where $I_{T}(\cdot)$ is an indicator function if the piece of data $b$ should be transformed (e.g. multiplied by a scalar or transposed) while being sent from $p_i$ to $p_j$ and $c$ is a constant that determines the complexity of the transformation to be applied to $b$.
    \item \textbf{Data Locality:} the way how each piece of data $b \in s$ is stored in the memory can also be taken into account. For example, if $b$ is a 2D block, then $w$ can take into account its stride, padding, alignment and similar. 
\end{itemize}

\subsection{Communication Graph}
\label{sec:comm_cost_graph}

A communication graph describes the communication pattern, i.e.,\ which data the processes are going to exchange.

Let $P=\left\{p_0, p_1, \dots, p_{n-1}\right\}$ be the set of all processes and $S = [[S_{ij}]]_{0\leq i, j < n}$ be the set of all packages that are to be exchanged, where $S_{ij}$ corresponds to the package to be sent from $p_i$ to $p_j$. 

We can represent this communication pattern with an undirected, bipartite graph $(P, P, E)$ with two identical partitions $P$ and the set of edges $E$, defined as follows:
$$E = \left\{(p_i, p_j) : (p_i, p_j) \in (P\times P)\land S_{ij} \neq \emptyset \right\}$$
\noindent We now formally define a \emph{communication graph} as an ordered tuple:
\begin{equation}
G = (P, E, S)
\label{def:G}
\end{equation}
\noindent If $w: P \times P\times S$ is a communication-cost function, then under $w$, each edge $(p_i, p_j) \in E$  has weight $w(p_i, p_j, S_{ij})$. In the same manner, the \emph{total communication cost} of graph $G$, denoted by $W(G)$, is defined as the sum of the weights of all edges:
\begin{equation}
    W(G) = \sum_{(p_i, p_j) \in E} w(p_i, p_j, S_{ij}).
    \label{def:WG}
\end{equation}

\section{Communication-Optimal Process Relabeling (COPR)}

In this section, we first formally define the COPR and then show that it can be formulated as a well-known Linear Assignment Problem (LAP). Then, we discuss the current state-of-the-art algorithms to solve the LAP that can also be used to find the COPR.

\subsection{The Formal Definition}

Let $G=(P, E, S)$ be a communication graph (see Equation~\eqref{def:G}) on processes $P$, with edges $E$ and data set $S = [[S_{ij}]]_{0\leq i,j < n}$.

In order to reduce the total communication cost $W(G)$, as defined in Equation~\eqref{def:WG}, we want to relabel the processes. Relabeling $p_j$ to $p_i$ makes their communication become local, hence potentially reducing the communication cost. We will first formally define these terms and then aim to find the process relabeling that minimizes the total communication cost. 

\begin{definition} (Process Relabeling) Let $P=\left\{p_0, p_1, \dots, p_{n-1}\right\}$ be a set of processes. A process relabeling $\sigma$ is a permutation of indices $[n]$, implicitly mapping each $p_i$ to $p_{\sigma(i)}$.
\end{definition}

Applying a process relabeling $\sigma$ to graph $G$ under communication-cost function $w$ yields the relabeled communication graph $G_{\sigma}$ which we define as follows.

\begin{definition} \label{def:RG}
(Relabeled Graph) Let $G=(P, E, S)$ be a communication graph and $\sigma$ be a process relabeling of $P$. The relabeled graph $G_{\sigma}$ is a communication graph $G_{\sigma}=(P, E', S')$, where:
\begin{align*}
&E' = \left\{(p_{i}, p_{\sigma(j)}) : (p_i, p_j) \in E\right\} \\
&S'_{i, \sigma(j)} = S_{ij}  \text{ , for each $(p_i, p_j) \in E$.}
\end{align*}
\end{definition}
\begin{remark}
Observe that the initial graph $G$ is isomorphic to $G_{id}$ where $id(\cdot)$ is the identity permutation $id(i)=i$ for all $i\in[n]$.
\end{remark}
We say that graph $G_\sigma$ is \emph{induced} by $\sigma$ relabeling. Note that after relabelling $j\rightarrow \sigma(j)$, the processes $p_i$ still has to send $S_{ij}$ to $p_{\sigma(j)}$, as before relabelling. If the communication between $p_i$ and $p_{\sigma(j)}$ is faster than between $i$ and $j$, this relabeling might reduce the communication cost.

Finally, we define the communication-optimal process relabeling (COPR) as the relabeling which yields the graph $G_\sigma$ with minimal cost.

\begin{definition}
(COPR) Let $G=(P, E, S)$ be a communication graph and $w: P \times P \times S\mapsto \mathbb{R}$ be a communication cost function. A communication-optimal process relabeling $\sigma_{opt}$ w.r.t. the cost function $w$ is defined as:
$$\sigma_{opt} = \argmin_{\sigma} W(G_{\sigma})$$
\label{def:COPR}
\end{definition}

\subsection{COPR as Linear Assignment Problem}

In this section we show how finding the communication-optimal process relabeling (COPR) from Definition~\ref{def:COPR} can be reduced to solving the Linear Assignment Problem (LAP) \cite{assignment_revised}. The LAP consists of finding the 
assignment, i.e.\ a bijection $\phi: A\mapsto B$ between two equally-sized sets $A$ and $B$ ($\lvert A\rvert = \lvert B\rvert)$ that optimizes the objective function of the form: 
\begin{equation}
\sum_{a \in A}c(a, \phi(a))
\label{def:LAP_c}
\end{equation}
A minimization LAP can be easily turned into the maximization version: it suffices to either change the sign of the objective function or subtract all the costs from the maximum cost. The latter technique is often used in practice, since some implementations of LAP algorithms assume all costs are non-negative.  We refer the reader to \cite{comb_opt,assignment_revised} for more details on LAP. 
Observe that finding the COPR directly by Definition~\ref{def:COPR} includes finding the process relabeling $\sigma$ that induces the relabeled graph $G_\sigma$ with minimal cost $W(G_\sigma)$. This is not directly an instance of LAP because the graph structure depends on $\sigma$. However, we will show that it can be reduced to LAP by first defining the relabeling gain, then proving that maximizing the relabeling gain yields the COPR and finally using this relabeling gain to formulate the problem of finding the COPR as a Linear Program corresponding to LAP.

\begin{definition} (Relabeling Gain)
Let $G=(P, E, S)$ be a communication graph, $w$ a communication cost function, $\sigma$ an arbitrary process relabeling and let $G_\sigma$ be the relabeled graph induced by $\sigma$. The relabeling gain $\delta: P\times P \mapsto \mathbb{R}$ for some $p_x, p_y \in P$ describes the gain of relabeling $p_x \xrightarrow[]{\sigma} p_y$ and is defined as:
\begin{equation}
\delta(p_x, p_y) = \sum_{p_i \in \mathcal{N}_G^L(p_x)} (\underbrace{w(p_i, p_x, S_{i, x})}_\text{before relabeling} - \underbrace{w(p_i, p_y, S_{i,x})}_\text{after relabeling})
\label{eq:delta}
\end{equation}
The total relabeling gain is defined as the sum of relabeling gains for each process:
$$\Delta_\sigma = \sum_{p_j \in P} \delta(p_j, p_{\sigma(j)})$$
\label{def:gain}
\end{definition}
\begin{remark}
If $w$ is the \emph{locally-free-volume-based} cost function defined in Equation~\eqref{eq:locally_free_w}, then it is easy to see that: $$\delta(p_x, p_y) = V(S_{y, x}) - V(S_{x,x}),$$
which intuitively means that by relabeling $p_x\xrightarrow[]{\sigma}p_y$, we gained $S_{y, x}$ as it became a local exchange (which costs 0 under $w$) but we lost $S_{x,x}$ which after relabeling requires a remote communication.
\label{rmk:gain}
\end{remark}

In the following Lemma we prove that the total relabeling gain is equal to the total weight difference between $G$ and $G_\sigma$, i.e.\ before and after the process relabeling.

\begin{lemma}
Let $G=(P, E, S)$ a communication graph, $w$ a communication cost function and $\sigma$ an arbitrary process relabeling. If $G_{\sigma}=(P, E', S')$ is the relabeled graph induced by $\sigma$ and $\Delta_\sigma$ the total relabeling gain, then the following holds:
$$\Delta_\sigma = W(G) - W(G_\sigma)$$
\label{lemma:delta_s}
\end{lemma}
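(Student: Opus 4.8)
The plan is to expand $\Delta_\sigma$ directly from its definition and rearrange the resulting double sum so that one half collapses to $W(G)$ and the other to $W(G_\sigma)$. Substituting the definition of the relabeling gain from Equation~\eqref{eq:delta} with $p_x = p_j$ and $p_y = p_{\sigma(j)}$, I would write
\begin{equation*}
\Delta_\sigma = \sum_{p_j \in P} \sum_{p_i \in \mathcal{N}_G^L(p_j)} \bigl( w(p_i, p_j, S_{ij}) - w(p_i, p_{\sigma(j)}, S_{ij}) \bigr),
\end{equation*}
and split this into a ``before'' sum and an ``after'' sum by linearity of summation.

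First I would observe that $p_i \in \mathcal{N}_G^L(p_j)$ is by definition equivalent to $(p_i, p_j) \in E$, so the ``before'' sum $\sum_{p_j}\sum_{p_i : (p_i,p_j)\in E} w(p_i, p_j, S_{ij})$ is precisely the sum over all edges of $E$, which equals $W(G)$ by Equation~\eqref{def:WG}. This step is immediate.

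The heart of the argument is identifying the ``after'' sum $\sum_{p_j}\sum_{p_i:(p_i,p_j)\in E} w(p_i, p_{\sigma(j)}, S_{ij})$ with $W(G_\sigma)$. Here I would invoke Definition~\ref{def:RG}: since $\sigma$ is a permutation, the map $(p_i, p_j) \mapsto (p_i, p_{\sigma(j)})$ is a bijection from $E$ onto $E'$, and the package is preserved as $S'_{i,\sigma(j)} = S_{ij}$. Re-indexing the edge sum $W(G_\sigma) = \sum_{(p_i,p_k)\in E'} w(p_i, p_k, S'_{ik})$ along this bijection (setting $k = \sigma(j)$) turns it into $\sum_{(p_i,p_j)\in E} w(p_i, p_{\sigma(j)}, S_{ij})$, matching the ``after'' sum term by term. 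Subtracting the two identifications then yields $\Delta_\sigma = W(G) - W(G_\sigma)$, as claimed.

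The only point requiring care is the re-indexing of the edge set under $\sigma$: one must verify that the bijectivity of $\sigma$ makes $(p_i,p_j)\mapsto(p_i,p_{\sigma(j)})$ a genuine bijection $E \to E'$ with neither double-counting nor omission, and that the convention $w(p_i,p_j,\emptyset)=0$ lets us restrict freely between summing over all index pairs and summing over edges (i.e.\ non-empty packages). Beyond this bookkeeping I expect no obstacle: the statement is essentially a rearrangement identity, and once the bijection is fixed the result follows from linearity of summation.
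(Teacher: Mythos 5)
Your proposal is correct and follows essentially the same route as the paper's proof: the paper computes $W(G)$ and $W(G_\sigma)$ as double sums over $p_j \in P$ and $p_i \in \mathcal{N}^L_G(p_j)$ (using the same bijective re-indexing of $E'$ via $\sigma$ and $S'_{i,\sigma(j)} = S_{ij}$) and subtracts to recognize $\Delta_\sigma$, while you expand $\Delta_\sigma$ and split it into the same two sums --- the identical argument read in the opposite direction.
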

\begin{proof}
By Equation~\eqref{def:WG} and Definition~\ref{def:RG}, for $W(G_\sigma)$ we have:
\begin{align}
W(G_\sigma) &= \sum_{(p_i, p_j) \in E'} w(p_i, p_j, S'_{ij})=\sum_{(p_i, p_j) \in E} w(p_i, p_{\sigma(j)}, S_{ij})\nonumber \\
&=\sum_{\vphantom{0^L}p_j \in P}\ \sum_{p_i \in \mathcal{N}^L_G(p_j)} w(p_i, p_{\sigma(j)}, S_{ij}) \label{lemma:eq:WGs}
\end{align}
Similarly, for $W(G)$, we have:
\begin{equation}
W(G)=\sum_{\vphantom{0^L}p_j \in P}\ \sum_{p_i \in \mathcal{N}^L_G(p_j)} w(p_i, p_j, S_{ij})
\label{lemma:eq:WG}
\end{equation}
Subtracting equations \eqref{lemma:eq:WG} and \eqref{lemma:eq:WGs}, by Definition~\ref{def:gain}, we get:
$$W(G) - W(G_\sigma) = \sum_{\vphantom{0^L}p_j \in P}\ \underbrace{\sum_{p_i \in \mathcal{N}^L_G(p_j)}  (w(p_i, p_j, S_{ij})- w(p_i, p_{\sigma(j)}, S_{ij}))}_{=\delta(p_j, \sigma(p_j))}= \Delta_{\sigma}.$$
\end{proof}

Next, we show that the COPR can also be obtained by maximizing the total relabeling gain. 

\begin{lemma}
Let $G=(P, E, S)$ be a communication graph and $w:P\times P\times S\mapsto \mathbb{R}$ a communication cost function. A communication-optimal process relabeling $\sigma_{opt}$ with respect to the communication function $w$ is given by:
$$\sigma_{opt} = \argmax_\sigma \Delta_\sigma$$
\label{lemma:argmax}
\end{lemma}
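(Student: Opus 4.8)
The plan is to derive this lemma almost immediately from the preceding Lemma~\ref{lemma:delta_s}, which already established the identity $\Delta_\sigma = W(G) - W(G_\sigma)$. The key observation is that the first term $W(G)$ is the total communication cost of the original (unrelabeled) graph and therefore does not depend on the choice of relabeling $\sigma$; it acts as an additive constant across all candidate permutations.

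First I would invoke Lemma~\ref{lemma:delta_s} to rewrite the total relabeling gain as $\Delta_\sigma = W(G) - W(G_\sigma)$ for an arbitrary $\sigma$. Then I would argue that maximizing $\Delta_\sigma$ over all $\sigma$ is equivalent to maximizing $-W(G_\sigma)$, since adding the constant $W(G)$ does not change the location of the maximum. Formally,
\begin{equation*}
\argmax_\sigma \Delta_\sigma = \argmax_\sigma \bigl(W(G) - W(G_\sigma)\bigr) = \argmax_\sigma \bigl(-W(G_\sigma)\bigr) = \argmin_\sigma W(G_\sigma).
\end{equation*}
Finally, by Definition~\ref{def:COPR}, the right-hand side is precisely $\sigma_{opt}$, which completes the argument.

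This proof is essentially a one-line corollary of Lemma~\ref{lemma:delta_s}, so I do not anticipate any genuine obstacle: all the substantive work — relating the per-process gain $\delta$ to the global cost difference — was already discharged in the previous lemma via the double-sum manipulation over left-neighborhoods. The only point requiring any care is the standard (but worth stating explicitly) fact that shifting an objective by a constant that is independent of the optimization variable leaves the $\argmax$/$\argmin$ unchanged, and that negating the objective interchanges $\argmax$ and $\argmin$. Since $W(G)$ is manifestly constant in $\sigma$, this justifies converting the maximization of the gain into the minimization of $W(G_\sigma)$ that defines the COPR.
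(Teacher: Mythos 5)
Your proposal is correct and follows exactly the same route as the paper's own proof: invoke Lemma~\ref{lemma:delta_s}, note that $W(G)$ is constant in $\sigma$, and convert $\argmax_\sigma \Delta_\sigma$ into $\argmin_\sigma W(G_\sigma) = \sigma_{opt}$ via Definition~\ref{def:COPR}. No gaps; the argument is complete as stated.
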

\begin{proof}
By Lemma~\ref{lemma:delta_s}, we have:
$$\argmax_\sigma \Delta_\sigma = \argmax_\sigma (W(G) - W(G_\sigma)).$$
Observe that $W(G)$ does not depend on $\sigma$ and is therefore constant with respect to $\sigma$. Hence, we can write:
$$\argmax_\sigma (W(G) - W(G_\sigma)) = \argmax_\sigma (-W(G_\sigma)) = \argmin_\sigma W(G_\sigma)  =\sigma_{opt},$$
where the last equality follows from Definition~\ref{def:COPR}.
\end{proof}
Finally, we prove that finding the COPR can be reduced to the following Linear Program (LP) that corresponds to the Linear Assignment Problem (LAP). 
\begin{theorem}
Let $P=\{p_0, \dots, p_{n-1}\}$ be a set of processes, $G=(P, E, S)$ a communication graph, $w: P \times P \times S\mapsto \mathbb{R}$ a communication cost function and $\delta: P\times P \mapsto \mathbb{R}$ a relabeling gain. Let $x^*_{ij}$ ($i, j=0, 1,\dots n$) be the optimal solution to the following Linear Program:
\begin{alignat}{2}
    \text{maximize } &\sum_{(p_i, p_j) \in P\times P} \delta(p_i, p_j) x_{ij} \label{LR:obj} \\
    \text{subject to: } \nonumber \\
    &\sum_{p_i \in P} x_{ij} = 1 &j=0, \dots, n-1 \label{eq:LP1} \\
    &\sum_{p_j \in P}x_{ij} = 1 &i = 0, \dots, n-1 \label{eq:LP2} \\
    & x_{ij} \geq 0 &i,j=0, \dots, n-1 \label{eq:LP3}.
\end{alignat}
The communication-optimal process relabeling (COPR) $\sigma_{opt}$ is given by:
$$\sigma_{opt}(i) = j \Leftrightarrow x^*_{ij} = 1 \text{ for all } i, j=0,\dots,n-1$$
\label{thm:assignment}
\end{theorem}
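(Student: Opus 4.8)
The plan is to show that the feasible region of the Linear Program is exactly the assignment (Birkhoff) polytope, that its integral vertices are the permutation matrices, and that the LP objective \eqref{LR:obj} evaluated at such a vertex coincides with the total relabeling gain $\Delta_\sigma$. Maximizing \eqref{LR:obj} is then the same as maximizing $\Delta_\sigma$ over all relabelings, so by Lemma~\ref{lemma:argmax} the maximizer is the COPR.

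First I would fix the correspondence between relabelings and $0/1$ matrices: to a process relabeling $\sigma$ associate the matrix $X^\sigma=[[x_{ij}]]$ with $x_{ij}=1$ iff $\sigma(i)=j$ and $x_{ij}=0$ otherwise. Constraints \eqref{eq:LP1} and \eqref{eq:LP2} force every row sum and every column sum to equal $1$, so a $0/1$ solution of the LP is precisely a permutation matrix, and this correspondence is a bijection onto the set of relabelings of $P$. Substituting into the total relabeling gain (Definition~\ref{def:gain}) and using that only the term $j=\sigma(i)$ survives the inner sum gives
$$\Delta_\sigma=\sum_{p_i\in P}\delta(p_i,p_{\sigma(i)})=\sum_{p_i\in P}\sum_{p_j\in P}\delta(p_i,p_j)\,x_{ij}=\sum_{(p_i,p_j)\in P\times P}\delta(p_i,p_j)\,x_{ij},$$
so the LP objective \eqref{LR:obj} evaluated at $X^\sigma$ is exactly $\Delta_\sigma$.

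The main obstacle --- indeed the only substantive step --- is to argue that relaxing integrality to $x_{ij}\geq 0$ in \eqref{eq:LP3} does not raise the optimal value above what a permutation matrix can already achieve. Here I would appeal to the structure of the assignment polytope: its constraint matrix is totally unimodular, so every vertex of the polytope defined by \eqref{eq:LP1}--\eqref{eq:LP3} is integral; equivalently, by the Birkhoff--von Neumann theorem the vertices of this polytope are exactly the permutation matrices. Since a linear functional over a bounded polytope attains its maximum at a vertex, there exists an optimal solution $x^*$ that is a permutation matrix $X^{\sigma^*}$.

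Finally I would chain the equivalences. The integral optimum $x^*$ determines a relabeling $\sigma^*$ via $\sigma^*(i)=j\Leftrightarrow x^*_{ij}=1$, and by the objective identity above $\Delta_{\sigma^*}$ equals the optimal LP value; hence $\sigma^*\in\argmax_\sigma\Delta_\sigma$. Lemma~\ref{lemma:argmax} then gives $\sigma^*=\sigma_{opt}$, which is exactly the claimed characterization $\sigma_{opt}(i)=j\Leftrightarrow x^*_{ij}=1$.
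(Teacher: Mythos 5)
Your proposal is correct and follows essentially the same route as the paper: both reduce the problem to the Linear Assignment Problem by invoking Birkhoff's theorem to justify relaxing integrality to $x_{ij}\geq 0$, identify the LP objective at a permutation matrix with the total relabeling gain $\Delta_\sigma$ (Equation~\eqref{eq:obj1}), and conclude via Lemma~\ref{lemma:argmax}. Your version merely spells out the vertex/total-unimodularity argument that the paper compresses into its citation of Birkhoff, and is slightly more careful in stating that \emph{some} optimal solution is integral rather than assuming all are.
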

\begin{proof}
This LP corresponds to the Linear Assignment Problem \cite{assignment_revised}. Due to Birkhoff \cite{birkhoff} (also reformulated as Theorem 1.1 in \cite{comb_opt}), we can assume $x_{ij} \in \{0, 1\}$, as this condition can be relaxed to $x_{ij} \geq 0$ in this case. The conditions \eqref{eq:LP1} and \eqref{eq:LP2} ensure the induced process relabeling is a bijection. Therefore, each feasible solution of this LP is a matrix representation of the permutation it induces.

The stated LP is always feasible, because $x_{ij} = 1 \Leftrightarrow i=j$ for all $i,j \in [n]$, corresponding to the identity permutation, is always a feasible solution. Let $x'_{ij}$ be an arbitrary feasible solution of the LP and let $\sigma$ be the permutation induced by $x'_{ij}$. Since $x'_{ij} \in \{0, 1\}$ and $x'_{ij} = 1 \Leftrightarrow \sigma(i)=j$, the objective function becomes:
\begin{equation}
\sum_{(p_i, p_j) \in P\times P} \delta(p_i, p_j) x'_{ij} = \sum_{p_i \in P} \delta(p_i, p_{\sigma(i)}) =\Delta_\sigma \label{eq:obj1}
\end{equation}

By Lemma~\ref{lemma:argmax}, maximizing the total relabeling gain $\Delta_\sigma$ yields the communication optimal process relabeling $\sigma_{opt}$ which finalizes the proof.
\end{proof}

Since the Linear Assignment Problem can also be formulated in terms of Graph Matchings \cite{comb_opt}, we also provide the equivalent reformulation of Theorem~\ref{thm:assignment} in terms of the \emph{Maximum Weight Bipartite Perfect Matching} problem. 

\begin{theorem}
Let $P=\{p_0, \dots, p_{n-1}\}$ be a set of processes, $G=(P, E, S)$ a communication graph, $w: P \times P \times S\mapsto \mathbb{R}$ a communication cost function and $\delta: P\times P \mapsto \mathbb{R}$ a relabeling gain. Let $G_\delta = (P, P, E_\delta)$ be a complete bipartite graph with edges $E_\delta = P\times P$, where each edge $(p_i, p_j) \in E_\delta$ is assigned weight $\delta(p_i, p_j)$. If $M\in E_\delta$ is a Maximum Weight Perfect Matching of graph $G_\delta$, the communication-optimal process relabeling (COPR) $\sigma_{opt}$ of $G$ is given by:
\begin{equation}
\sigma_{opt}(i) = j \Leftrightarrow (p_i, p_j) \in M
\label{eq:matching}
\end{equation}
\label{thm::assignment2}
\end{theorem}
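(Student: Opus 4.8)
The plan is to reduce Theorem~\ref{thm::assignment2} to the already-proved Theorem~\ref{thm:assignment} by exploiting the standard equivalence between the Linear Assignment Problem in its LP form and the Maximum Weight Bipartite Perfect Matching problem on a complete bipartite graph. The central observation is that perfect matchings of $G_\delta = (P, P, E_\delta)$ stand in bijective correspondence with permutations $\sigma$ of $[n]$, and that under this correspondence the weight of a matching equals exactly the total relabeling gain $\Delta_\sigma$.

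First I would make the correspondence precise. Since $G_\delta$ is complete bipartite with both partitions of size $n$, any perfect matching $M$ pairs each left vertex $p_i$ with exactly one right vertex, say $p_j$, and covers every right vertex exactly once; hence the map $i \mapsto j$ defined by $(p_i, p_j) \in M$ is a bijection on $[n]$, i.e.\ a permutation $\sigma$. Conversely, every permutation $\sigma$ yields the perfect matching $\{(p_i, p_{\sigma(i)}) : i \in [n]\}$. Equivalently, writing $x_{ij} = 1 \Leftrightarrow (p_i, p_j) \in M$ and $x_{ij} = 0$ otherwise, the incidence matrix of $M$ is precisely a $\{0,1\}$-matrix satisfying the row- and column-sum constraints \eqref{eq:LP1} and \eqref{eq:LP2}; by the Birkhoff argument already invoked in Theorem~\ref{thm:assignment}, these are exactly the integral feasible points of that LP.

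Next I would match the objectives. By definition the weight of $M$ is $\sum_{(p_i, p_j) \in M} \delta(p_i, p_j)$, which, rewritten via the incidence matrix, equals $\sum_{(p_i, p_j) \in P \times P} \delta(p_i, p_j) x_{ij}$ --- the objective \eqref{LR:obj} of the LP. Moreover, as already computed in \eqref{eq:obj1}, this sum equals the total relabeling gain $\Delta_\sigma$ of the associated permutation. Thus a Maximum Weight Perfect Matching of $G_\delta$ is, under the correspondence, an optimal solution of the LP, and equivalently a permutation maximizing $\Delta_\sigma$. Invoking Theorem~\ref{thm:assignment} (or directly Lemma~\ref{lemma:argmax}) then identifies this permutation with the COPR $\sigma_{opt}$, giving $\sigma_{opt}(i) = j \Leftrightarrow (p_i, p_j) \in M$ as claimed.

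There is no genuinely hard step here: the only point requiring a little care is verifying that the matching-to-permutation bijection preserves the objective value and that the LP's integral feasible points coincide with matching incidence matrices. Once this bookkeeping is in place, the result follows immediately from the previously established theorem, so the main work is simply translating between the assignment and the matching vocabularies.
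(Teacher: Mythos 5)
Your proposal is correct and takes essentially the same route as the paper: the paper's own proof is precisely the observation that feasible LP solutions, permutations, and perfect matchings of $G_\delta$ correspond via $x'_{ij} = 1 \Leftrightarrow \sigma(i) = j \Leftrightarrow (p_i, p_j) \in M$, reducing the claim to Theorem~\ref{thm:assignment}. You simply spell out this correspondence (bijection of matchings with permutations, equality of matching weight with the LP objective and with $\Delta_\sigma$) in more detail than the paper's terser argument.
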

\begin{proof}
This is just a reformulation of Theorem~\ref{thm:assignment} where each feasible solution $x'_{ij}, i,j \in [n]$ of the LP from Theorem~\ref{thm:assignment} can also be viewed as a matching of $G_\delta$. The relation between $x'_{ij} (i, j\in [n])$, the corresponding relabeling $\sigma$ and the corresponding matching $M$ is given by:
$$x'_{ij} = 1 \Leftrightarrow \sigma(i) = j\Leftrightarrow (p_i, p_j) \in M.$$
\end{proof}
\begin{remark}
The graph $G_\delta$ admits a Perfect Matching because $M = \{(p_i, p_i): p_i \in P\}$ is always a valid Perfect Matching. 
\label{rmk:perfect}
\end{remark}

\subsection{COPR Algorithm}

In Theorem~\ref{thm:assignment} it is shown how finding the COPR can be reduced to the Linear Program that corresponds to the Linear Assignment Problem (LAP). A reformulation of this LAP in terms of Maximum Weight Bipartite Perfect Matching (MWBPM) yields Theorem~\ref{thm::assignment2}. In addition, the LAP can also be formulated in terms of Network Flows, in which case it is reduced to the \emph{Maximum Flow of Optimal Cost} problem \cite{comb_opt}.

An example of the matching-based algorithm for finding the COPR that follows from Theorem \ref{thm::assignment2} is shown in Algorithm~\ref{alg:copr}. The complexity of this algorithm depends on the complexity of 1) computing the weights, i.e.\ costs 2) solving a LAP (Line~\ref{alg:line:lap}). Let $\lvert P\rvert = n$ be the number of processes. 

The weights are computed in Lines~\ref{alg:line:delta}--\ref{alg:line:delta2}. If all data volumes $V(S_{ij}), i, j \in [n]$ are precomputed, then the for-loop computing all $\delta(p_i, p_j)$ by Equation~\eqref{eq:delta} serially takes $O(n^3)$. On distributed architectures, this reduces to $O(n^2)$. Furthermore, for simpler cost functions like locally-free-volume-based cost from Equation~\eqref{eq:locally_free_w}, computing $\delta(p_i, p_j)$ is constant (see Remark~\ref{rmk:gain}), which further reduces the total complexity down to $O(n)$. The complexity is therefore dominated by the complexity of solving the LAP. 

The LAP solver is invoked in Line \ref{alg:line:lap}. One of the most famous LAP algorithms is the Hungarian (Kuhn–Munkres) Algorithm \cite{hungarian,hungarian3} with complexity $O(n^3)$, which is optimal for dense graphs that we are dealing with (note that the graph $G_\delta$ from Theorem~\ref{thm::assignment2} is a \emph{complete} bipartite graph). This algorithm has also been GPU-accelerated \cite{gpu2} and there is also a distributed version with a multi-gpu support \cite{gpu1}. Other interesting LAP algorithms include a fast matching-based randomized algorithm \cite{angelika} and a recently developed, distributed, approximation algorithm \cite{approx} that achieves great speedups while finding near-optimal solutions.

\begin{algorithm}
\caption{Finding the COPR}
\begin{algorithmic}[1]
\Require 
\Statex Process Set: $P=\{p_0, p_1, \dots, p_{n-1}\}$ 
\Statex Data Set: $S=[[S_{ij}]]_{0 \leq i,j < n}$ \Comment{$S_{ij}:=\text{package to be sent }p_i\mapsto p_j$}
\Statex Communication-cost function: $w$ \Comment{$w(p_i, p_j, S_{ij}) := \text{cost of sending } p_i \xrightarrow[]{S_{ij}} p_j$}
\Ensure 
\Statex Comm-Optimal Process Relabeling (COPR): $\sigma_{opt}: [n]\mapsto [n]$ \Comment{$p_i\rightarrow p_{\sigma_{opt}(i)}$}
\Procedure{FindCOPR}{$P, S, w$} $\rightarrow \sigma_{opt}$
    \State $\sigma_{opt} = \textbf{0}_n$ \Comment{COPR as an array of size $n$}
    \State $weights=\textbf{0}_{n \times n}$ \label{alg:line:delta} 
    \Comment{adjacency matrix of $G_{\delta}$ from Theorem~\ref{thm::assignment2}}
    \For{$(p_i, p_j) \in P\times P$} \Comment{$G_\delta$ is a complete bipartite graph} 
        \State $weights[i][j] = \delta(p_i, p_j)$ \Comment{$\delta(\cdot, \cdot)$ defined in Equation~\eqref{eq:delta}}
    \EndFor \label{alg:line:delta2}
    \State $M = \text{MWBPM}(n, weights)$ \Comment{Max Weight Bipartite Perfect Matching$(G_\delta)$} \label{alg:line:lap}
    \For{$(i, j) \in M$}
    \State $\sigma_{opt}[i] = j$ \Comment{$M\mapsto \sigma_{opt}$ as in Equation~\eqref{eq:matching}}
    \EndFor
    \State \Return $\sigma_{opt}$
\EndProcedure
\end{algorithmic}
\label{alg:copr}
\end{algorithm}

\section{COSTA: Comm-Optimal Shuffle and Transpose Alg.}
\label{sec:costa}

COSTA uses the communication-optimal process relabeling (COPR) to implement the routine:
\begin{equation}
A=\alpha\cdot \operatorname{op}(B) + \beta \cdot A, \quad \operatorname{op} \in \{\operatorname{transpose}, \operatorname{conjugate-transpose}, \operatorname{identity}\}
\label{eq:costa}
\end{equation}
on distributed systems, where $A, B$ are matrices with potentially different layouts and $\alpha, \beta$ are scalars. Since this routine, in a distributed setting, includes the data reshuffling (i.e.\ redistribution) across processes while potentially transposing the data, we call this routine: \emph{Shuffle and Transpose}. It encapsulates the functionality of two ScaLAPACK routines: \texttt{pxtran(u)} for matrix transpose and \texttt{pxgemr2d} for data redistribution.

\phantomsection
\label{par:matrix_layout}
\textbf{Matrix Layout:} describes how the matrix is distributed among processes. The way how a matrix $A$ is partitioned is given by two sorted arrays: row-splits $R_A$ and column-splits $C_A$ where block $b_{ij}$ contains the rows in range $\left[R_A(i), R_A(i+1)\right)$ and the columns in range $\left[C_A(j), C_A(j+1)\right)$. The row-splits and columns-splits together define the grid $\mathrm{Grid}_A = (R_A, C_A)$. The owner of a block $b$ from this grid is given by $\operatorname{Owners}_A(b)\in P$, where $P$ is the set of processes holding $A$. The layout of a matrix $A$ is hence an ordered tuple $L(A)=(\mathrm{Grid}_A, P, \operatorname{Owners}_A)$. 

\textbf{Grid Overlay:} given two grids $\mathrm{Grid}_A=(R_A, C_A)$ and $\mathrm{Grid}_B=(R_B, C_B)$ we define the Grid Overlay $\mathrm{Grid}_{A,B}=(R_A \cup R_B, C_A \cup C_B)$ as the grid obtained by overlaying both grids. It is easy to see that each block $b_{A, B} \in \mathrm{Grid}_{A, B}$ is covered by one and only one block $b_A\in \mathrm{Grid}_A$ and one and only one block $b_B\in \mathrm{Grid}_B$. We therefore define $\operatorname{cover}_{A}(b_{A, B}) = b_A$ and $\operatorname{cover}_{B}(b_{A, B}) = b_B$. 

\textbf{Data Reshuffling:} given matrices $A$ and $B$ with same dimensions, but different layouts $L(A)$ and $L(B)$ on processes $P=\{p_0, p_1,\dots,p_{\lvert P\rvert}\}$, we want to copy the values of $B$ into the layout of $A$. In a distributed setting, this includes communication of matrix pieces. In order to be able to use Algorithm~\ref{alg:copr} for obtaining the communication-optimal process relabeling (COPR) $\sigma_{opt}$ for this problem, we have to construct the set of packages $S=[[S_{ij}]_{0\leq i,j<\lvert P\rvert}$ where $S_{ij}$ contains all blocks that should be sent from process $p_i$ to $p_j$. We show how to obtain the COPR for this problem in Algorithm~\ref{alg:costa_copr}.

\begin{algorithm}
\caption{COPR for (Matrix) Data Reshuffling}
\begin{algorithmic}[1]
\Require 
\Statex Matrix Layout $L(A)=(\mathrm{Grid}_A, P, \operatorname{Owners}_A)$
\Statex Matrix Layout $L(B)=(\mathrm{Grid}_B, P, \operatorname{Owners}_B)$
\Statex Communication-Cost Function $w$ 
\Ensure
\Statex Set of data packages: $S=[[S_{ij}]]_{0\leq i,j <\lvert P\rvert}$ \Comment{package $S_{ij}$ to be sent from $p_i$ to $p_j$}
\Statex COPR: $\sigma_{opt}$ for copying $B$ into the layout of A \Comment{$\sigma_{opt}$ relabeling: $p_i\rightarrow p_{\sigma_{opt}(i)}$}
\Procedure{FindCOPRforMatrices}{$L(A), L(B), w$} $\rightarrow (S,\sigma_{opt})$
    \State $S=[[S_{ij}]]_{0\leq i,j <\lvert P\rvert}=\emptyset_{\lvert P\rvert \times \lvert P \rvert}$ \Comment{set of data packages: initialize all $S_{ij}=\emptyset$}
    \For{$b \in \mathrm{Grid}_{A,B}$} \Comment{iterate over the Grid Overlay}
        \State $p_i=\operatorname{Owners}_A(\operatorname{cover}_A(b))$ \Comment{owner of the block which covers $b$ in $L(A)$}
        \State $p_j=\operatorname{Owners}_B(\operatorname{cover}_B(b))$ \Comment{owner of the block which covers $b$ in $L(B)$}
        \State $S_{ij}=S_{ij} \cup \{b\}$ \Comment{add block $b$ to the right package}
    \EndFor 
    \State $\sigma_{opt}=FindCOPR(P, S, w)$ \Comment{Algorithm~\ref{alg:copr}}
    \State \Return (S, $\sigma_{opt})$
\EndProcedure
\end{algorithmic}
\label{alg:costa_copr}
\end{algorithm}

\textbf{Scale and Transpose/Conjugate:} observe that the routine from Equation~\eqref{eq:costa} includes the possibility to scale the matrices (multiply by a scalar), transpose/conjugate or take a submatrix. Let $L(A)$ and $L(B)$ be two different matrix layouts. In the previous paragraph we discussed the case when matrix $B$ should be copied to the layout of matrix $A$ without any transformation. Here, we discuss the cases when $B$ should also be transformed. If only a submatrix of $B$ should be taken, then we can first truncate the corresponding row-splits and column-splits in $\mathrm{Grid}_B$ and then apply the Algorithm~\ref{alg:costa_copr} to obtain the COPR. If $B$ should also be transposed/conjugated or scaled before being copied to $A$, then the cost of this transformation can be taken into account in the communication-cost function $w$ (see Section~\ref{sec:comm_cost}). Practically, the transformation can be performed in one of the following ways:
\begin{itemize}
    \item transform before sending: each process can first transform the data locally in temporary send buffers, and then send it.
    \item transform after receiving: each process can transform the data upon receipt in temporary receive buffers. This approach is better in asynchronous settings because the data transformation can be overlapped with communication of other packages.
\end{itemize}
We chose to transform upon receipt, since we are using asynchronous communication.

\textbf{Communication-Cost Function:} the communication cost function can be arbitrary. In practice, we use the simple locally-free-volume-based cost function defined in Equation~\eqref{eq:locally_free_w}. 


\noindent Finally, taking into account these insights, we present COSTA in Algorithm~\ref{alg:costa}.

\begin{algorithm}
\caption{COSTA: Comm-Optimal Shuffle and Transpose Algorithm}
\begin{algorithmic}[1]
\Require 
\Statex Matrix $A$ with Layout $L(A)=(\mathrm{Grid}_A, P, \operatorname{Owners}_A)$
\Statex Matrix $B$ with Layout $L(B)=(\mathrm{Grid}_B, P, \operatorname{Owners}_B)$
\Statex Scalars $\alpha, \beta$
\Statex Operator $\operatorname{op} \in \{\operatorname{transpose}, \operatorname{conjugate-transpose}, \operatorname{identity}\}$
\Statex Comm-Cost Function (optional) $w$ \Comment{by default, defined by Equation~\eqref{eq:locally_free_w}}
\Statex Process id: $p_{id} \in P$
\Ensure
\Statex Performs $A = \alpha\cdot \operatorname{op}(B) + \beta \cdot A$ 
\Procedure{COSTA}{$A, L(A), B, L(B), \alpha, \beta, op$}
    \State $(S, \sigma_{opt})=\operatorname{FindCOPRforMatrices}(L(A), L(B), w)$ \label{line:s} \Comment{Algorithm~\ref{alg:costa_copr}}
    \For{$p_j \in P$} 
        \State send asynchronously $S_{id, j}$ to $p_{\sigma(j)}$ \Comment{send local data to relabeled processes}
    \EndFor
    \For{$\mathrm{package} \in \{0, 1, \dots, \lvert P \rvert\}$} 
        \State receive from any $p_i\in P$ in a temp. recv buffer
        \State scale, transpose or conjugate the received package $S_{i, id}$ if needed
    \EndFor
\EndProcedure
\end{algorithmic}
\label{alg:costa}
\end{algorithm}

\section{Implementation Details}
\label{sec:implementation}

COSTA (Algorithm~\ref{alg:costa}) is implemented using the hybrid MPI+OpenMP parallelization model. The code is publicly available under the BSD-3 Clause Licence at~\cite{costa}. It has the following features: 1) provides the ScaLAPACK wrappers for \texttt{pxgemr2d} and \texttt{pxtran}; 2) supports arbitrary grid-like matrix layouts (not limited to block-cyclic). It also support both row- and col-major ordering of matrices, unlike ScaLAPACK; 3) can use the COPR to minimize the communication; 4) supports batched transformation, i.e., multiple pairs of matrix layouts can be transformed in the same communication round; 5) supports arbitrary data types using C++ templates.

\begin{figure}
\centering
\includegraphics[width=0.7\textwidth]{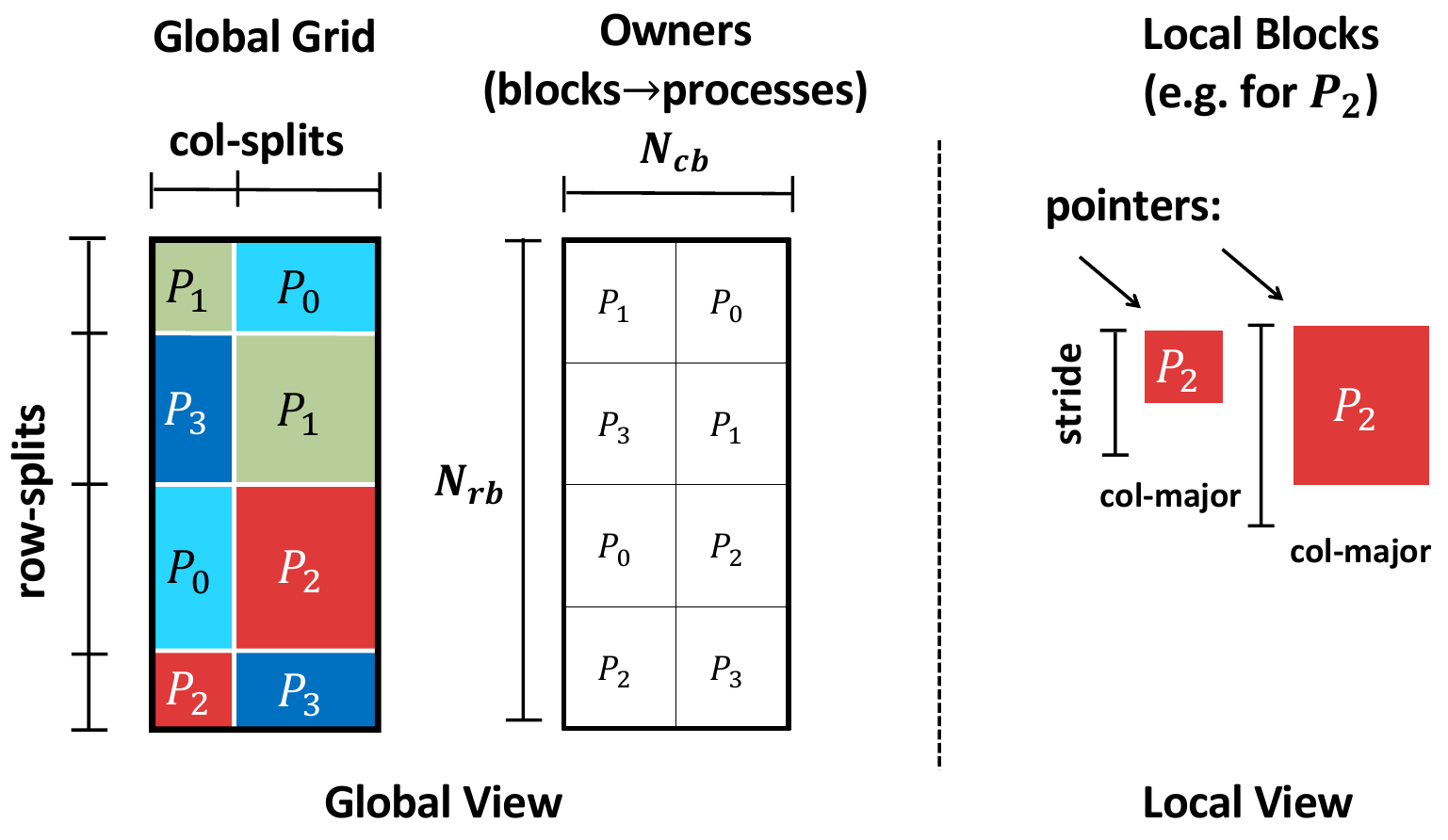}
\caption{COSTA Matrix Descriptor: a global view contains a grid (defined by row-splits and col-splits) and the owners matrix mapping blocks to processes. A local view is a list of blocks, each containing a pointer, a stride, dimensions and the ordering type (row- or col-major).}
\label{fig:matrix}
\end{figure}

\textbf{Matrix Layout Descriptor.} Following the theoretical definition of a matrix layout from Paragraph~\ref{par:matrix_layout}, in practice we use a more detailed matrix descriptor depicted in Figure~\ref{fig:matrix}, that also takes into account block strides and whether the blocks are stored in row- or col-major order. This makes the COSTA layout descriptor more general than the block-cyclic descriptor used by ScaLAPACK.

\textbf{Implementation.} After $S=[[S_{ij}]]$ is computed in Line~\ref{line:s}, each process has a list of blocks it should send and a list of blocks it should receive. Each process first copies all blocks into a temporary send buffer using OpenMP, such that all blocks to be sent to the same target are packed together into a single, contiguous package in the send buffer. These packages are then sent using the non-blocking $\texttt{MPI\_Isend}$ such that only a single package (containing multiple blocks) is sent to each receiving process, which significantly reduces the latency costs.

\textbf{Overlap of Communication and Computation.} Upon receipt of a package with $\texttt{MPI\_Waitany}$, blocks within this package are unpacked and the transformation (transpose, conjugate-transpose or multiplication by a scalar) is performed using OpenMP, while other packages are still being communicated in the background, thus enabling the overlap of communication and computation. A cache-friendly, multi-threaded kernel for matrix transposition is provided. Moreover, the blocks that are local in both the initial and the final layouts are not copied to temporary send/receive buffers, but are handled separately with OpenMP, to avoid unnecessary data copies to and from temporary buffers and potentially additional copies that MPI might perform. The handling of these blocks is also overlapped with MPI communication. 

\textbf{Batched Transformation.} If multiple pairs of layouts are to be transformed together, then the procedure is the same, except that now a package contains blocks that might belong to different layouts. Still, all the blocks to be sent to the same process are packed together and sent within a single message, thus reducing the latency costs even further.

\textbf{Max Weight Bipartite Perfect Matching.} In Line \ref{alg:line:lap} in Algorithm \ref{alg:copr} we are free to choose how we want to solve the matching problem. In practice, we use a simple greedy algorithm, which is a 2-approximation.


\section{Performance Results}
\label{sec:performance}

We evaluate the performance of COSTA in three groups of benchmarks: first, since COSTA implements \texttt{pxgemr2d} (data-reshuffling, i.e.\ distributed copy) and \texttt{pxtran} (transpose) routines, we compare the performance of COSTA vs. available ScaLAPACK implementations (Intel MKL, Cray LibSci) in isolation (Section~\ref{sec:bench1}). This comparison is done without using the Process Relabeling, as ScaLAPACK API does not support it. In the second part (Section~\ref{sec:bench2}), we measure how much communication-volume can be reduced by using the process relabeling. In the final benchmark (Section~\ref{sec:bench3}), we run COSTA within a real-world application and analyse its performance, as well as the communication-volume reduction.

\textbf{Hardware Details:} all the experiments are performed on Piz Daint Supercomputer of Swiss National Supercomputing Centre (CSCS) which consists of two partitions: the CPU partition with 1813 nodes (Cray XC40 compute nodes, Intel Xeon E5, 2x18 cores, 64-128GB RAM) and the GPU partition with 5704 nodes (Cray XC50 compute nodes, Intel Xeon E5, 12 cores, 64GB RAM + NVIDIA P100 16GB GPU). 

\textbf{Software Details:} in the benchmarks, we used Cray-MPICH v7.7, Intel MKL v19.1, Cray LibSci and Cray LibSci-Acc v20.06, CP2K v7.1, COSMA v2.3 and COSTA v1.0. All the libraries were compiled with a GCC v10.1 compiler available on the Piz Daint Supercomputer. 

\subsection{COSTA vs. ScaLAPACK}
\label{sec:bench1}

We compare the performance of COSTA, Intel MKL and Cray LibSci for the following routines: \texttt{pdgemr2d} (distributed copy, i.e.\ reshuffling) and \texttt{pdtran} (transpose). To this end, we use the ScaLAPACK wrappers that COSTA provides (see Section~\ref{sec:implementation}).

It is known that ScaLAPACK performance often varies drastically for different block sizes which determine the matrix distributions (i.e. layouts) and the optimal block size depends on the target machine~\cite{blocks}. Scientific applications usually have a default block size (e.g. in~\cite{sirius}, it is $32\times 32$) and reaching the optimal block size (which is $128\times 128$, for our applications) requires data reshuffling and potentially a transpose operation. Inspired by this example, we run the following benchmark: we vary the matrix size from 100-200k (square-case) and for each size we transform the matrix from $32\times 32$ to $128\times 128$ block size, using the pdgemr2d routine and the same for the pdtran (transpose) routine. 

We also include the batched version of COSTA for comparison. The batched version amortizes the latency costs since multiple layouts are transformed within the same communication round. This is often useful for operations like matrix multiplication which involves 3 matrices and each of them might potentially need to be transformed, as is the case in the COSMA algorithm~\cite{cosma}. To account for this scenario, we also ran the batched version of COSTA on each test-case from this benchmark with one difference: instead of transforming a single instance of each test case, we let the batched version transform 3 identical instances of each test-case and report the amortized cost per test-case instance.

This benchmark is run on 128 dual-socket CPU nodes (2x18 cores) of Piz Daint Supercomputer using 2 MPI ranks per node, 18 threads per rank and $16\times 16$ process grid. Each experiment was repeated 5 times and the best time is reported in Figure~\ref{fig:scalapack}. We observed similar performance with other matrix shapes (including rectangular matrices) and other process grids, so here we present just the square-matrices case.

\begin{figure}
\centering
\begin{subfigure}{.49\textwidth}
\includegraphics[width=\textwidth]{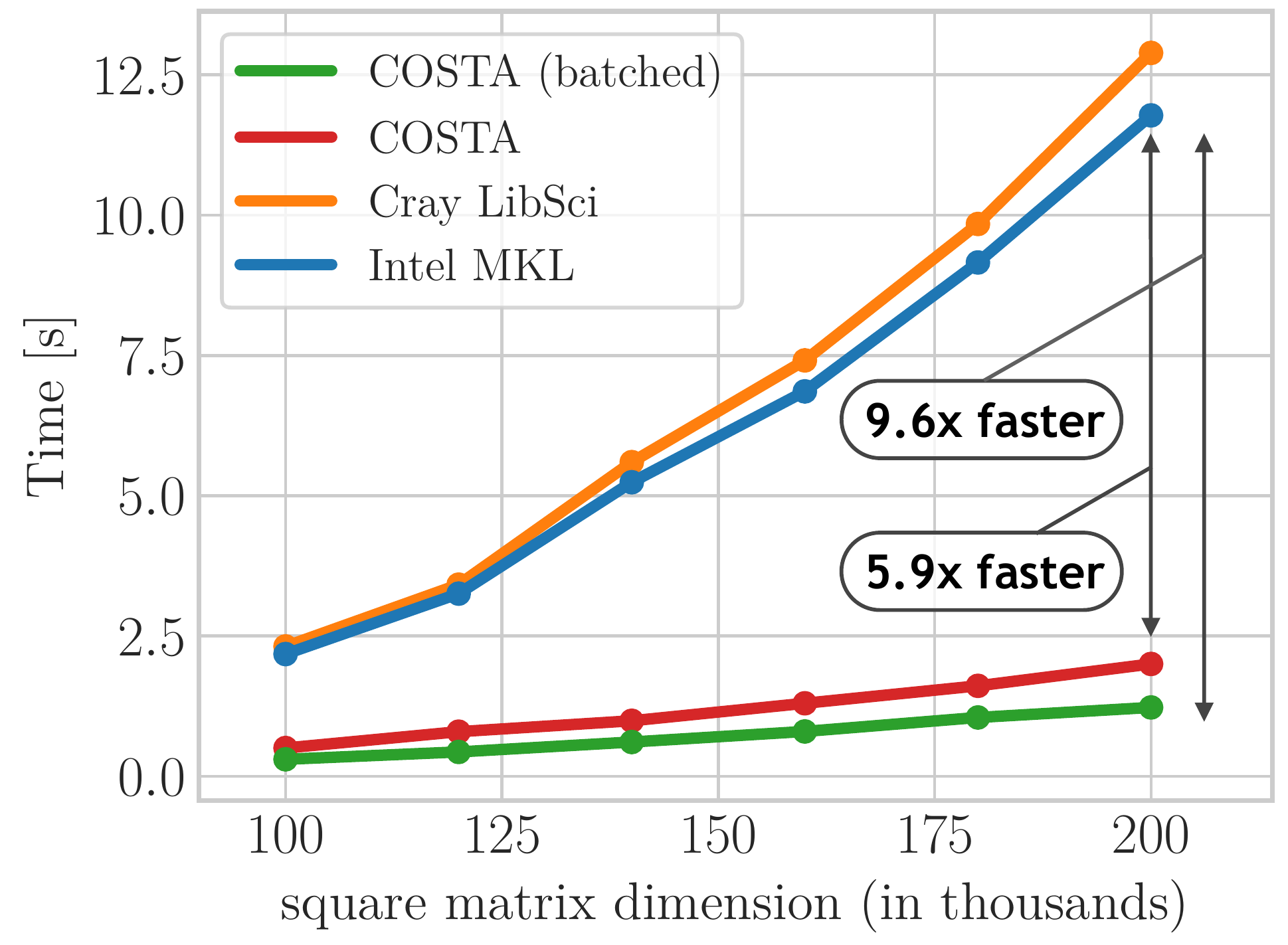}
\caption{Data Reshuffling (\texttt{pdgemr2d}).}
\end{subfigure}\hfill
\begin{subfigure}{.47\textwidth}
  \includegraphics[width=\textwidth]{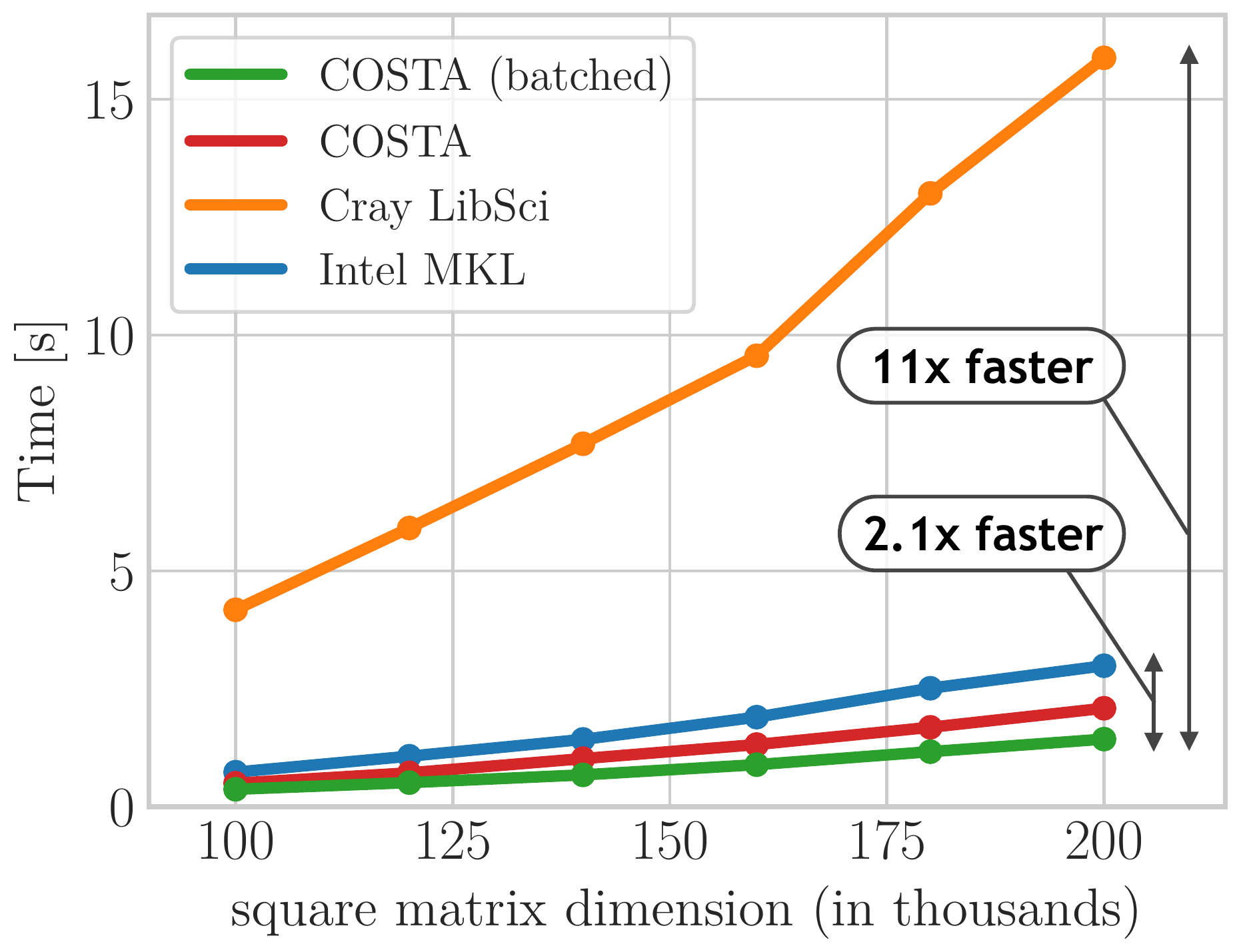}
  \caption{Matrix Transpose (\texttt{pdtran}).}
\end{subfigure}
  \caption{The performance comparison of ScaLAPACK routines for data reshuffling (left) and matrix transpose (right) by different algorithms (see Section~\ref{sec:bench1}). Reported times for COSTA (batched) are amortized over 3 identical instances of a test-case.}
\label{fig:scalapack}
\end{figure}

\subsection{Process Relabeling}
\label{sec:bench2}

We measured how much communication-volume is reduced when redistributing (reshuffling) a matrix between two block-cyclic layouts. The matrix size was $10^5\times 10^5$ and the process grid was $10\times 10$. The process grid was row-major for the initial layout and column-major for the final layout. The target layout had block size fixed at $10^4$. The block size of the initial layout was varied from 1 up to $10^4$ and for each block size the communication-volume was computed before and after relabeling. Based on this we computed the communication-volume reduction (in percent) due to process relabeling, that is shown in Figure~\ref{fig:reduced_comm}. When both layouts have the same block size ($=10^4$), then they only differ in the block assignment to processes, in which case the process relabeling is able to completely eliminate the communication.

\begin{SCfigure}
  \centering
  \includegraphics[width=0.6\textwidth]{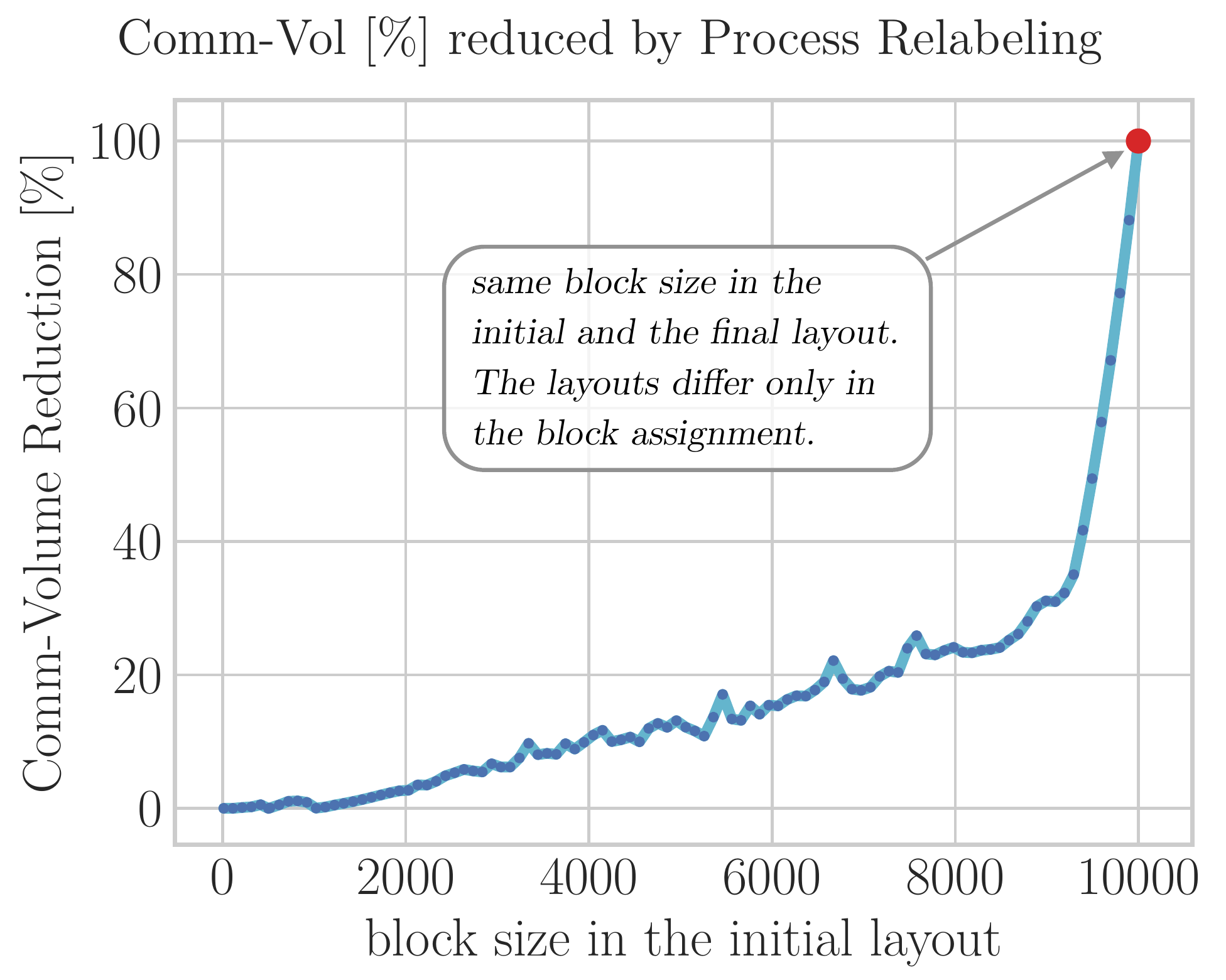}
  \vspace{0.5em}
  \caption{\label{fig:reduced_comm}
  The initial layout of the matrix (i.e.\ the block size) was varied, whereas the final layout was fixed with a block size of $10^4$. The process grid was fixed to $10\times 10$ in both layouts. When the initial and target layouts have the same blocks (and hence the same grids), process relabeling makes all communication local, thus eliminating the need for remote communication (the red dot).}
\end{SCfigure}

\subsection{Real-World Application: RPA simulations}
\label{sec:bench3}

In Random Phase Approximation (RPA) \cite{rpa} simulations, a major part of computation consists of many large tall-and-skinny matrix multiplications with matrix transposition. For a system size $N$ (the number of atoms), the matrices to be multiplied are of size $O(N^2)\times O(N)$, where $O(N^2)$ is proportional to $(\text{occupied orbitals} \cdot \text{virtual orbitals})$ and $O(N)$ is proportional to the number of auxiliary basis functions. Concretely, for simulating 128 water molecules, the matrix sizes are depicted in Figure~\ref{fig:rpa_matrix}. This multiplication is repeated many times and takes $\approx 80\%$ of the total simulation time on 128 dual-socket CPU nodes of Piz Daint. Therefore, an efficient matrix-multiplication routine is essential for this benchmark. Recently, a communication-optimal matrix-multiplication algorithm COSMA~\cite{cosma} has been developed which offers significant speedups for all matrix shapes. However, COSMA natively uses a specialized blocked matrix layout which depends on matrix dimensions and the number of available processors. Moreover, as shown in Figure~\ref{fig:rpa_matrix}, one of the matrices (matrix $A$) also has to be transposed during the reshuffling. On the other hand, the CP2K \cite{cp2k} software package, which implements the RPA method, assumes a block-cyclic (ScaLAPACK) layout. Since COSMA layout is not block-cyclic, existing ScaLAPACK routines for reshuffling and transpose cannnot be used.

We used COSTA with Process Relabeling to integrate COSMA into CP2K and compare its performance to Intel MKL, Cray LibSci (CPU) and Cray LibSci (GPU). We simulate 128 water molecules with the RPA method on 128, 256, 512 and 1024 GPU nodes of Piz Daint Supercomputer. The total matrix-multiplication time as well as the total simulation time are reported in Figure~\ref{fig:cp2k-timings}.

\begin{figure}
  \centering
  \includegraphics[width=\textwidth]{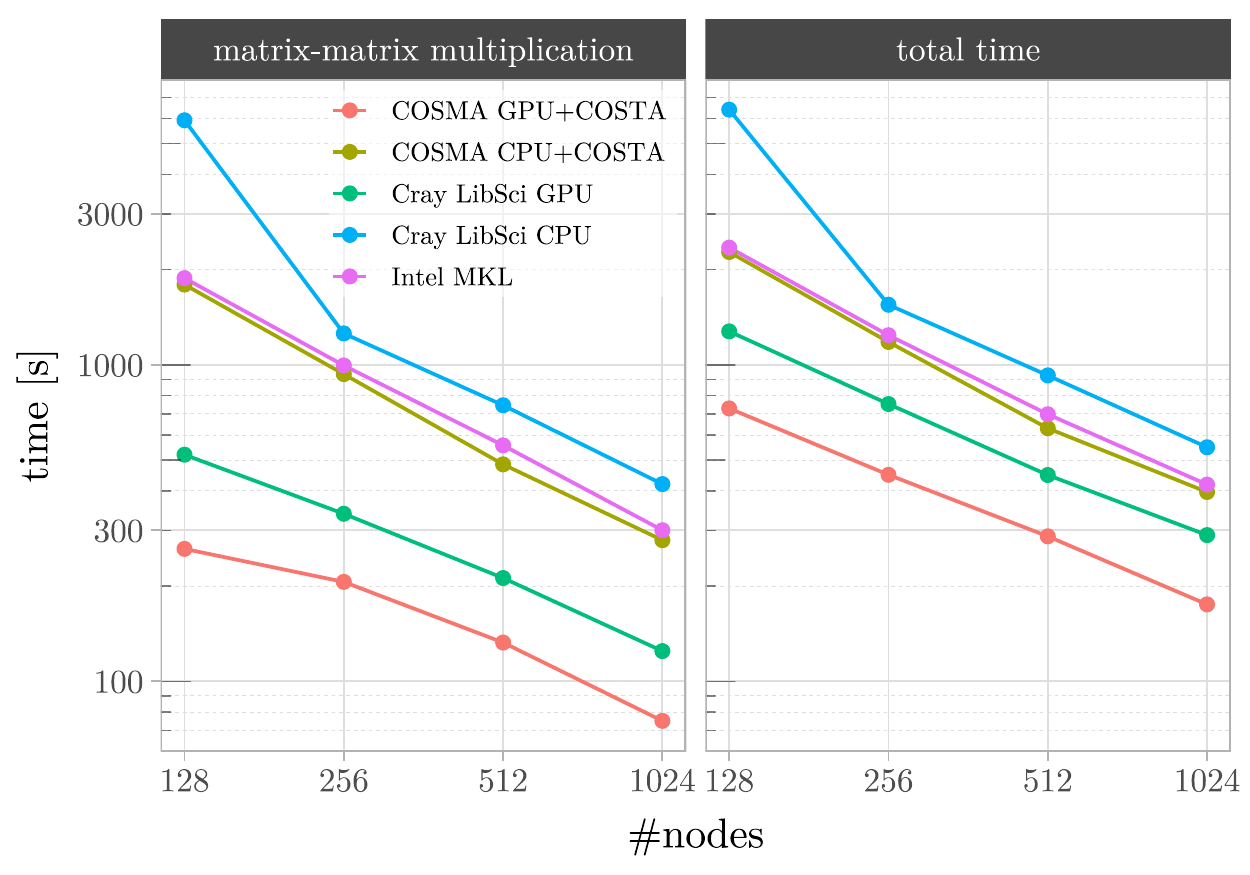}
  \caption{The RPA simulation of 128 water molecules using different matrix-multiplication backends. We used COSTA with Process Relabeling to redistribute and transpose matrices between ScaLAPACK (block-cyclic) and the native COSMA layout in each matrix-multiplication call. COSMA + COSTA outperform alternative libraries on both CPU and GPU. COSTA accounts for roughly 10\% of the total runtime of COSMA+COSTA in these cases.}
  \label{fig:cp2k-timings}
\end{figure}

\begin{figure}
  \centering
  \begin{minipage}[t]{.48\textwidth}
    \includegraphics[width=\textwidth]{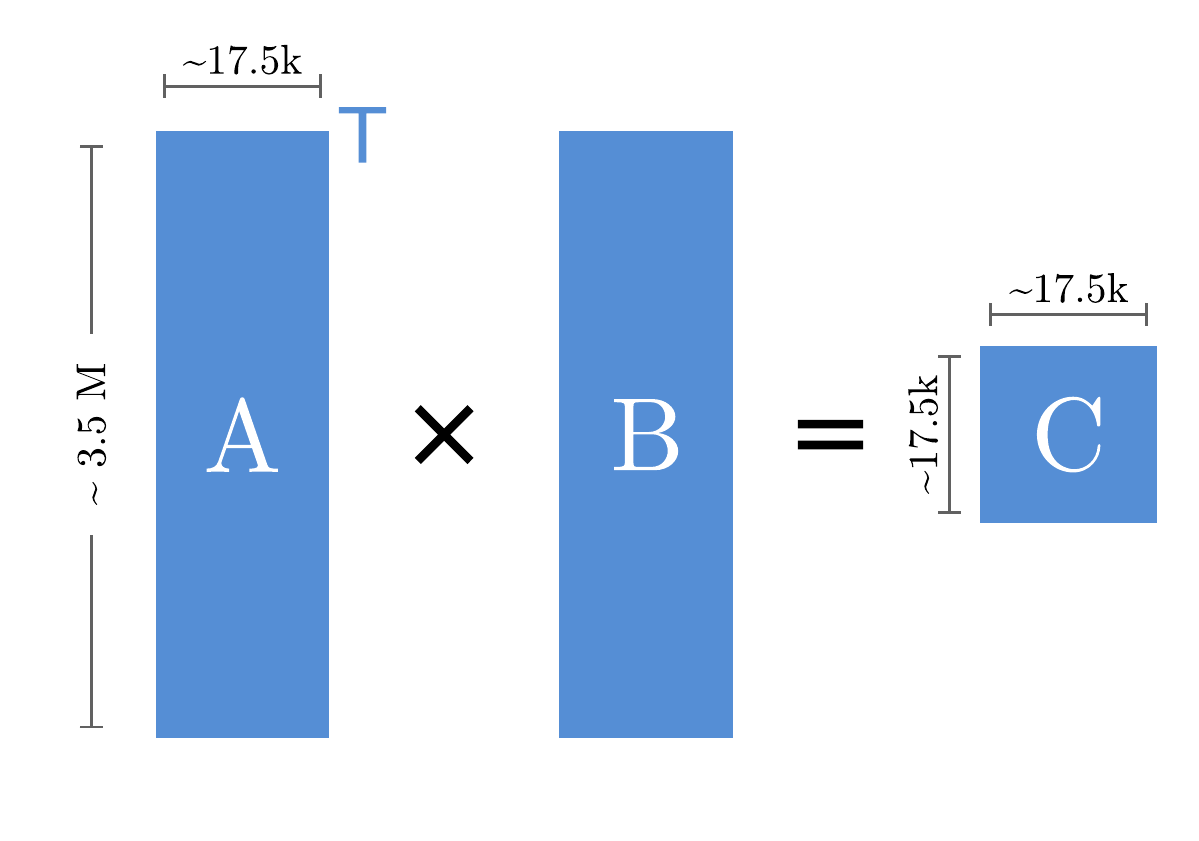}
    \caption{The computationally dominant matrix multiplication in the RPA simulation that is performed many times throughout the simulation. The exact size of $A$ and $B$ is $3,473,408 \times 17,408$.}
    \label{fig:rpa_matrix}
  \end{minipage}
  \hspace{2pt}
  \begin{minipage}[t]{.48\textwidth}
    \includegraphics[width=\textwidth]{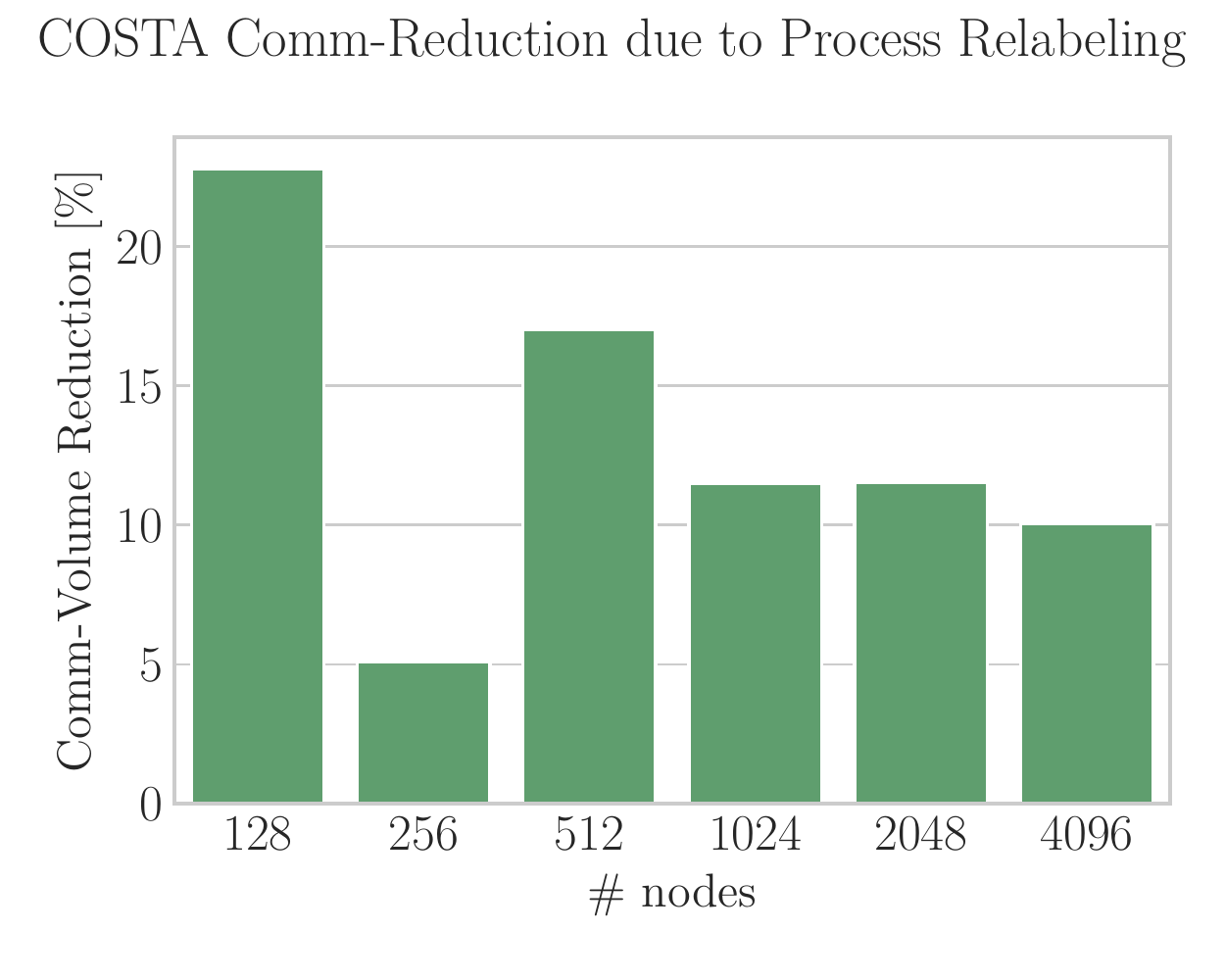}
    \caption{The communication volume reduction due to process relabeling in COSTA during the transformation of matrices (left) between the ScaLAPACK (block-cyclic) and the native COSMA layouts.}
    \label{fig:comm_volume}
  \end{minipage}
\end{figure}

For the dominant matrix multiplication in the RPA simulation (Figure~\ref{fig:rpa_matrix}), we computed the total communication volume for transforming matrices between ScaLAPACK (block-cyclic) and the native COSMA layout using COSTA. The total communication volume with and without process relabeling is measured and the percentual volume reduction is shown in Figure~\ref{fig:comm_volume}. In this case, ScaLAPACK is always using the same block sizes (and same layouts) for $A$ and $B$, whereas matrix C is distributed only on a subset of processes (the ones in the upper part of the rectangular process grid). COSMA on the other hand uses different blocks and layouts for each matrix and all matrices are distributed on all the processes. This makes it hard to predict how this interplay of different layouts is behaving as the number of nodes increases.




\section{Conclusion}

We have shown how the communication-optimal process relabeling (COPR) can efficiently be found in a very general setting where the network topology, data-locality, data transformation cost (e.g.\ transposing the data) and other parameters can all be taken into account through a cost-function. The theoretical contribution of this paper is not limited to matrix redistribution or transposition, but can also be used in general, e.g. for tensors. Besides transposition, any other operation can be performed on the local data -- it suffices to include the operation cost into the cost function of the COPR algorithm (Algorithm~\ref{alg:copr}). 

We developed COSTA: a highly-efficient algorithm with process relabeling for performing matrix shuffle and transpose routine (Equation~\eqref{eq:costa}). The experiments have shown that COSTA outperforms ScaLAPACK multiple times even when no process relabeling is used. COSTA provides ScaLAPACK wrappers for \texttt{pxgemr2d} and \texttt{pxtran} routines making the integration into scientific libraries straightforward. In addition, COSTA can also deal with arbitrary grid-like matrix layouts and is not limited to block-cyclic layouts and supports both row- and column-major storage of blocks and efficiently overlaps communication and computation. Moreover, a batched version is also provided, which can transform multiple pairs of layouts, while significantly reducing the latency.

Furthermore, we have shown that the process relabeling can reduce the communication volume for data reshuffling even by 100\%, e.g. when the initial and final layouts differ up to a process permutation. We used COSTA to integrate the highly-optimized COSMA algorithm into the CP2K software package and have shown that COSTA can enable the interoperability between different existing scientific libraries and the novel efficient algorithms with very little overhead. In practice, we have shown that COSTA is able to significantly reduce the communication cost also in real world applications where initial and final layouts are both changing in different ways.

%
%
\bibliographystyle{splncs04.bst}
\bibliography{refs.bib}

\end{document}